\PassOptionsToPackage{table}{xcolor}
\PassOptionsToPackage{protrusion=true,expansion=false}{microtype}
\documentclass[]{aidata}

\usepackage[toc,page,header]{appendix}
\usepackage{minitoc}
\usepackage{latexsym}
\usepackage[utf8]{inputenc}
\usepackage{inconsolata}
\usepackage{booktabs}
\usepackage{makecell}
\usepackage{array}
\usepackage{tabularx}
\usepackage{ragged2e}
\usepackage{amsmath,amssymb,mathtools,amsthm}
\usepackage{algorithm}
\usepackage{algpseudocode}
\usepackage[most]{tcolorbox}
\usepackage{fvextra}
\usepackage{tikz}
\usetikzlibrary{fit,calc}
\usepackage{enumitem}

\newcolumntype{C}{>{\Centering\arraybackslash}X}

\theoremstyle{plain}
\newtheorem{theorem}{Theorem}[section]

\theoremstyle{definition}

\theoremstyle{remark}

\definecolor{OverallCyan}{HTML}{F7FEFE}
\definecolor{OverallGray}{HTML}{F5F5F5}
\definecolor{BlockLavender}{HTML}{EFEAFB}
\definecolor{OursPink}{HTML}{F7DDE8}
\definecolor{GainGreen}{HTML}{0B7D2B}
\definecolor{RuleGray}{HTML}{BDBDBD}
\colorlet{algorange}{orange!30}
\colorlet{algpink}{OursPink}
\colorlet{algpurple}{BlockLavender}

\tcbset{
  appendixbox/.style={
    enhanced,
    breakable,
    colback=gray!2,
    colframe=gray!35,
    boxrule=0.35pt,
    arc=1pt,
    left=4pt,
    right=4pt,
    top=3pt,
    bottom=3pt,
    fonttitle=\bfseries
  }
}

\newtcbox{\hlprimarytab}{on line, rounded corners, box align=base,
  colback=green!10, colframe=white, size=fbox, arc=3pt,
  before upper=\strut, top=-2pt, bottom=-4pt, left=-2pt, right=-2pt, boxrule=0pt}

\newtcbox{\hlsecondarytab}{on line, rounded corners, box align=base,
  colback=red!10, colframe=white, size=fbox, arc=3pt,
  before upper=\strut, top=-2pt, bottom=-4pt, left=-2pt, right=-2pt, boxrule=0pt}

\newcommand{\dashifted}{\raisebox{0.5\depth}{\tiny$\downarrow$}}
\newcommand{\uashifted}{\raisebox{0.5\depth}{\tiny$\uparrow$}}
\newcommand{\dar}[1]{{\raisebox{0.6ex}{\tiny\hlsecondarytab{\dashifted\,#1}}}}
\newcommand{\uar}[1]{{\raisebox{0.6ex}{\tiny\hlprimarytab{\uashifted\,#1}}}}
\newcommand{\nodiff}{{\raisebox{0.6ex}{\tiny\hlprimarytab{\textemdash}}}}
\newcommand*{\tikzmk}[1]{\tikz[remember picture,overlay,] \node (#1) {};\ignorespaces}
\newcommand{\boxit}[1]{\tikz[remember picture,overlay]{\node[xshift=-7.5em,yshift=-1em,fill=#1,opacity=.25,fit={(A)($(B)+(1.1\linewidth,1\baselineskip)$)}] {};}\ignorespaces}
\newcommand{\boxittwo}[1]{\tikz[remember picture,overlay]{\node[xshift=-4.15em,yshift=-1em,fill=#1,opacity=.25,fit={(A)($(B)+(1.025\linewidth,1\baselineskip)$)}] {};}\ignorespaces}
\newcommand{\boxithree}[1]{\tikz[remember picture,overlay]{\node[xshift=-4.15em,yshift=-1em,fill=#1,opacity=.25,fit={(A)($(B)+(0.475\linewidth,1\baselineskip)$)}] {};}\ignorespaces}
\newcommand{\vnum}[1]{\makebox[3.6em][l]{#1}}
\newcommand{\varr}[1]{\makebox[2.2em][l]{#1}}
\newcommand{\ogray}[1]{\cellcolor{OverallGray}#1}

\graphicspath{{figures/}}

\title{Socratic-SWE: Self-Evolving Coding Agents via Trace-Derived Agent Skills}

\author[2,*]{Chuan Xiao}
\author[1,2,*]{Zhengbo Jiao}
\author[2]{Shaobo Wang}
\author[1]{Wei Wang}
\author[1]{Bing Zhao}
\author[1,\dagger]{HU WEI}
\author[2,\dagger]{Linfeng Zhang}
\author[1]{Lin Qu}

\affiliation[1]{AI Data, Alibaba Group}
\affiliation[2]{Shanghai Jiao Tong University}

\contribution[*]{Equal contribution}
\contribution[\dagger]{Corresponding authors}

\abstract{LLM-driven software engineering agents have become a central testbed for
real-world language-model capability, yet their training remains limited
by the availability of high-quality SWE tasks. Existing synthetic data
methods typically create tasks through fixed mutation or bug-injection
procedures, making the resulting distributions largely independent of
the agent's own weaknesses and training progress. We introduce
\textsc{Socratic-SWE}, a closed-loop self-evolution framework that
reuses the agent's historical solving traces as a source of training
signal. Rather than treating traces only as evidence for reward
computation, \textsc{Socratic-SWE} distills them into structured agent
skills that summarize recurring failures and effective repair patterns.
These skills then guide the generation of targeted repair tasks in real
repositories. Candidate tasks are checked through execution-based
validation and scored with a solver-gradient alignment reward, so that
the retained tasks are both verifiable and useful for improving the
Solver. The updated Solver produces new traces, enabling the task
curriculum to adapt over successive rounds. Across SWE-bench Verified,
SWE-bench Lite, SWE-bench Pro, and Terminal-Bench 2.0,
\textsc{Socratic-SWE} consistently improves over self-evolving baselines
under the same compute budget, reaching \textbf{50.40\%} on SWE-bench
Verified after three iterations. These results suggest that solving
traces can serve as a scalable substrate for self-evolving SWE agents.}

\date{\today}

\begin{document}

\maketitle

\begin{figure}[tb!]
  \centering
  \vspace{-0.6em}
  \includegraphics[width=0.98\linewidth]{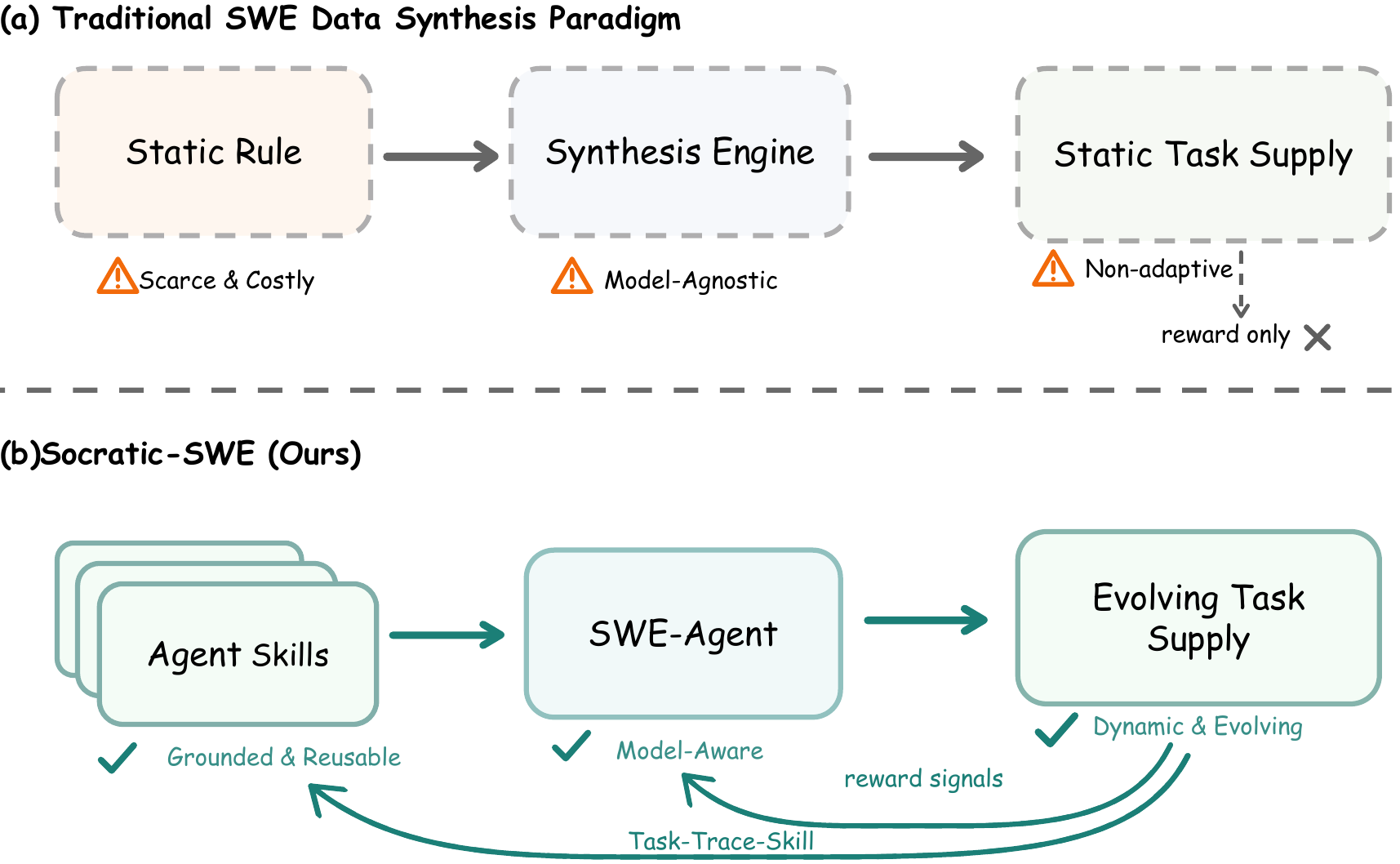}
  \vspace{-0.7em}
  \caption{
Comparison of SWE data synthesis paradigms. (a) Traditional: an open-loop, single-pass pipeline where static rules synthesize a fixed task supply. Evaluation traces are used only post hoc for rewards, not fed back into task construction, leaving the supply scarce, model-agnostic, and non-adaptive. (b) Socratic-SWE: a closed-loop, self-evolving process that distills grounded skills from solving traces to guide repository-grounded task construction, while feeding rewards into subsequent skill and task updates.
}
  \vspace{-0.7em}
  \label{fig:motivation}
\end{figure}

\section{Introduction}

Software engineering is one of the most impactful application domains 
for large language models (LLMs), and the capability of coding agents 
has emerged as a critical measure of real-world intelligence. Unlike 
mathematical reasoning or short-form code generation, SWE tasks require 
agents to complete a full loop of bug localization, repair, and 
verification through long-horizon interaction with real code 
repositories, making SWE a natural setting for training capable 
language agents in realistic environments. Reinforcement learning (RL) 
has shown strong promise in this setting: SWE-RL~\citep{wei2025swerl} 
and SWE-Gym~\citep{pan2025swegym} demonstrate substantial gains by 
embedding agents in executable repository environments. However, RL 
training fundamentally depends on large quantities of high-quality 
tasks, and such data remains scarce in the SWE domain. Existing 
synthetic pipelines attempt to alleviate this bottleneck through 
AST-level mutation, LM-guided rewriting, or learned bug 
injection~\citep{yang2025swesmith,allamanis2021bugrepair}, but 
they operate independently of the agent's own training experience, 
resulting in largely static task distributions that may be poorly 
aligned with the model's actual capability gaps.

At the same time, each round of RL training produces a valuable 
byproduct: solving traces. These traces record the agent's behavior 
throughout repository interaction, including code search, file editing, 
command execution, and test runs. They reveal where the agent repeatedly 
fails, which repair strategies tend to induce regressions, and which 
repository patterns lead to ineffective exploration. Yet existing 
methods use traces primarily for reward extraction or credit assignment. 
GRPO~\citep{shao2024deepseekmath} reduces a 
trajectory to a scalar reward, while process-level approaches such as 
GiGPO~\citep{feng2026gigpo}, iStar~\citep{liu2025agentic}, and process 
reward models~\citep{lightman2024lets} assign finer-grained credit 
within trajectories. Regardless of granularity, traces are discarded once supervision has been computed. As the model improves, the 
fraction of tasks in a fixed distribution that still provides useful 
training signal becomes increasingly sparse, and learning eventually 
stagnates.

We argue that these discarded traces contain the signals needed 
to evolve the curriculum. Because the traces are generated by the agent 
itself, they provide a direct view of the model's current capability 
boundary. This suggests a self-evolving loop: the agent distills 
capabilities from historical solving traces and uses them to construct 
the next round of training tasks, without requiring external annotation. 
Related ideas have shown promise in adjacent domains. R-Zero~\citep{huang2026rzero} and 
Socratic-Zero~\citep{wang2025socraticzero} adapt task proposers to the 
solver's frontier, Absolute Zero~\citep{zhao2025absolutezero} achieves 
zero-data self-improvement through execution-based verification, and 
SkillRL~\citep{xia2026skillrl} and SKILL0~\citep{lu2026skill0} distill 
reusable skills from interaction traces. However, these settings involve 
simple traces, such as symbolic reasoning chains, short programs, or 
finite-step games, where pass/fail feedback is often sufficient to drive 
adaptation. SWE traces are richer: a single trajectory may span dozens 
of tool invocations across search, editing, execution, and testing, 
exposing diverse and diagnosable failure modes. Combined with 
deterministic execution-based verification from repository test suites, 
this makes SWE a particularly suitable domain for trace-based self-evolution.

To address this challenge, we propose \textsc{Socratic-SWE}, a 
\textbf{self-evolution framework for software engineering}. 
Figure~\ref{fig:motivation} summarizes the closed-loop design of
\textsc{Socratic-SWE}.
\textsc{Socratic-SWE} operates in a three-stage loop. First, it distills 
recurring failure modes and successful repair behaviors from historical 
solving traces into an \emph{Agent Skill Registry}, a structured 
representation of the model's capability boundary. Second, a Generator 
uses these skills as constraints to construct targeted repair tasks in 
real code repositories. Candidate tasks are filtered through a staged 
execution-based validation pipeline to ensure reproducibility and 
non-triviality, and are further scored by a 
\emph{solver-gradient alignment reward} that favors tasks whose induced 
Solver updates align with trusted validation gradients. Third, the 
Solver trains on the accepted tasks and produces new traces, which 
support the next round of skill distillation. In this way, traces are 
transformed into skills, skills guide task generation, and generated  
tasks produce the traces needed for continued self-evolution.

Empirically, we validate \textsc{Socratic-SWE} on SWE-bench Verified, SWE-bench Lite, SWE-bench Pro, and Terminal-Bench 2.0. As summarized in Figure~\ref{fig:main_result}, Socratic-SWE consistently outperforms self-evolving baselines under the same compute budget, reaching 50.40\% on SWE-bench Verified after 3 iterations (+7.80 over the base agent and +3.40 over SSR). These results show that trace-derived curricula provide stronger training signals, support sustained self-evolution without external annotation, and transfer beyond repository repair to terminal-native tasks in realistic SWE environments. Our contributions are as follows:

\begin{enumerate}[leftmargin=*,topsep=0.5pt,itemsep=0pt,parsep=0pt]
    \item \textbf{Trace-Driven Self-Evolution Paradigm.}
    We show that solving traces, typically discarded after reward
    computation, can serve as a reusable substrate for self-evolving SWE
    agents. \textsc{Socratic-SWE} converts historical Solver behavior
    into structured agent skills, which guide subsequent
    repository-grounded task construction in a closed trace-skill-task
    loop.

    \item \textbf{Gradient-Aware Curriculum Optimization.}
    We introduce a curriculum mechanism that combines skill-conditioned
    task generation, execution-based task validation, and a
    solver-gradient alignment reward. Generated tasks are first verified
    in real repository environments to ensure reproducibility and
    solvability, and are then prioritized by whether their
    induced Solver updates align with trusted validation gradients.

    \item \textbf{Superior Empirical Performance.}
    We evaluate \textsc{Socratic-SWE} on SWE benchmarks, including
    SWE-bench and Terminal-Bench. Under the same compute
    budget, it consistently outperforms self-evolving baselines,
    demonstrating strong self-evolution and transfer capabilities.

\end{enumerate}

\begin{figure*}[t]
  \centering
  \includegraphics[width=0.95\textwidth]{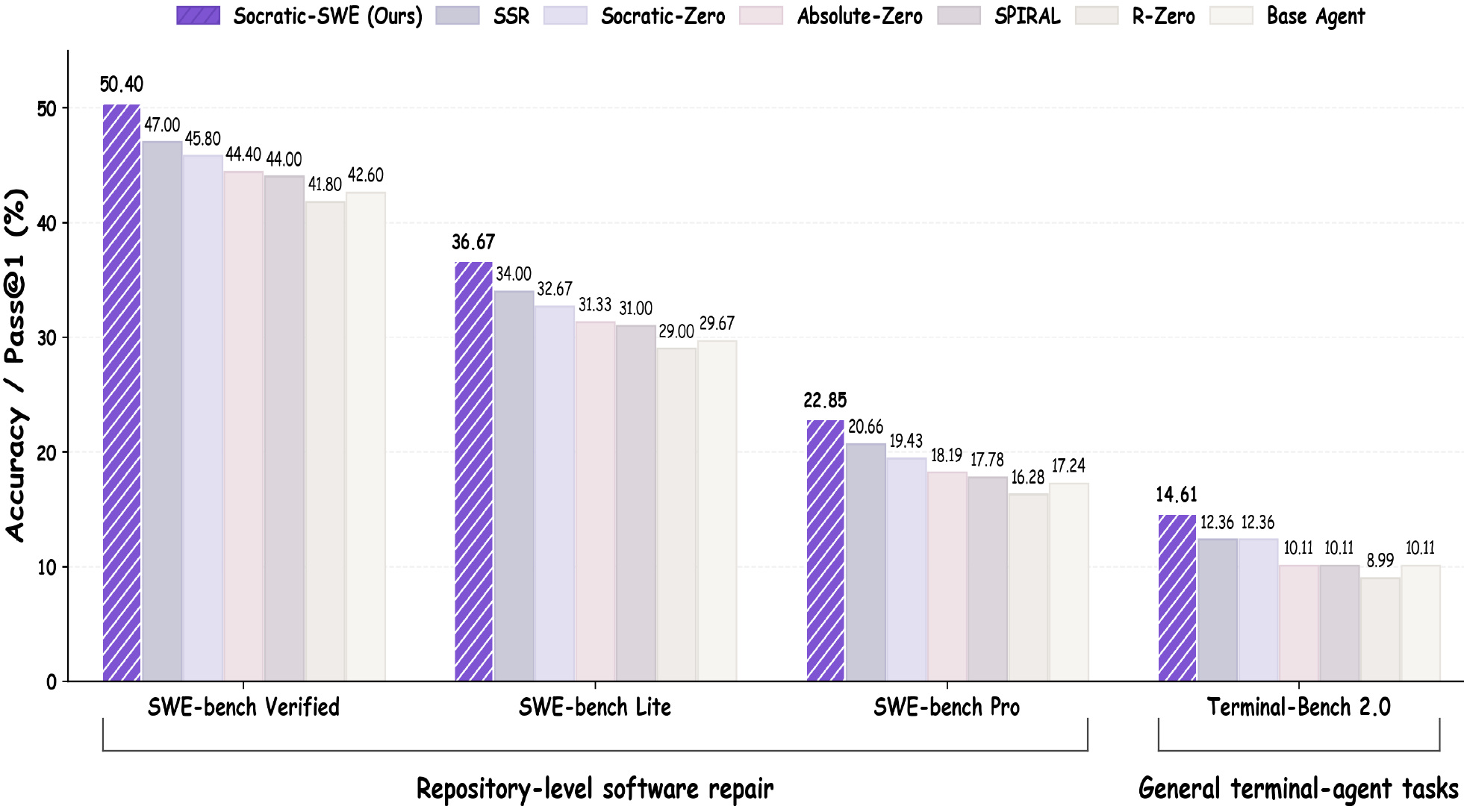}
\caption{
Main performance comparison on four benchmarks under a fixed 36k-instance training budget. Socratic-SWE achieves 50.40\% on SWE-bench Verified, 36.67\% on SWE-bench Lite, 22.85\% on SWE-bench Pro, and 14.61\% on Terminal-Bench 2.0, outperforming all adapted baselines across both repository-level software repair and terminal-agent tasks.
}
  \label{fig:main_result}
\end{figure*}

\section{Related Work}
\label{sec:related-work}

\noindent\textbf{Agentic Reinforcement Learning.}
Reinforcement learning is a central post-training paradigm for LLM reasoning and decision-making. OpenAI o1 popularized large-scale reasoning RL~\citep{openai2024o1card}, and DeepSeekMath's GRPO provided an open-source framework for verifiable rewards~\citep{shao2024deepseekmath}. DAPO~\citep{yu2026dapo}, GSPO~\citep{zheng2025gspo}, SAPO~\citep{gao2025sapo}, and GDPO~\citep{liu2026gdpo} improve RL efficiency and stability, but still assume immediately verifiable rewards. Agentic tasks violate this assumption: feedback is sparse and spread over long trajectories. GiGPO~\citep{feng2026gigpo}, ARPO~\citep{dong2026arpo}, and SkyRL-Agent~\citep{cao2025skyrlagent} extend RL to multi-turn tool use and long-horizon optimization. OPUS~\citep{wang2026opus}, GradAlign~\citep{yang2026gradalign}, and OptimSyn~\citep{fan2026optimsyn} use optimizer feedback, gradient alignment, or synthetic data construction as dynamic signals. Integrating verifiable feedback, failure analysis, and long-horizon agent training for SWE remains underexplored.

\noindent\textbf{SWE Coding Agents.}
SWE agents must navigate executable repositories, edit code, run commands, and preserve behavior under regression tests. SWE-bench introduced repository-level issue resolution from GitHub issues~\citep{jimenez2024swebench}. SWE-agent~\citep{yang2024sweagent}, OpenHands~\citep{wang2025openhands}, and Agentless~\citep{xia2025agentless} show that LLMs can solve tasks with tools, code execution, and verification feedback. SWE-Gym~\citep{pan2025swegym}, SWE-RL~\citep{wei2025swerl}, and SWE-Master~\citep{song2026swemaster} build training settings from executable environments, software evolution data, and post-training pipelines. BugLab~\citep{allamanis2021bugrepair}, SWE-smith~\citep{yang2025swesmith}, and SSR~\citep{wei2025selfplayswerl} expand data through bug injection, synthetic task generation, or self-play. Yet data construction emphasizes scale and executability, while using repository understanding, executable feedback, and model failures only weakly for targeted task generation.

\noindent\textbf{Self-Evolving LLMs.}
Self-play reduces human labels by assigning proposer, solver, evaluator, or teacher roles. TTRL~\citep{zuo2025ttrl}, R-Zero~\citep{huang2026rzero}, and Socratic-Zero~\citep{wang2025socraticzero} form self-improvement signals from unlabeled data, Challenger-Solver co-evolution, or Teacher-Solver-Generator loops. Absolute Zero~\citep{zhao2025absolutezero}, SPIRAL~\citep{liu2025spiral}, Socratic-Geo~\citep{jiao2026socraticgeo}, and SpatialEvo~\citep{li2026spatialevo} reduce noisy feedback through verifiable environments, including code execution, zero-sum games, programmatic geometry, and deterministic spatial tasks. SkillRL~\citep{xia2026skillrl} and SKILL0~\citep{lu2026skill0} distill interaction experience into reusable skills. Agentic Proposing~\citep{jiao2026agenticproposing} uses modular reasoning skills to synthesize harder verifiable examples. Overall, self-evolution is shifting toward environment-grounded feedback and skill-driven synthesis; for SWE, the key challenge is transforming long-horizon repair failures into future task distributions that keep pushing model capabilities.

\section{Methodology}
\label{sec:method}

\subsection{Socratic-SWE Framework}
\label{sec:method_overview}

We introduce \textbf{Socratic-SWE}, a co-evolutionary self-play framework for software engineering agents. A shared policy $\pi_\theta$ alternates between two roles: a \textbf{Generator} that constructs repository-grounded repair tasks and a \textbf{Solver} that produces patches for them. The framework learns from two external signals: an \textbf{Agent Skill Registry} $\mathcal{S}$ distilled from historical interaction traces, and a staged execution-grounded validation pipeline that filters generated tasks before they enter training. Figure~\ref{fig:method} illustrates the overall loop.

\begin{figure*}[t]
  \centering
  \includegraphics[width=0.95\textwidth]{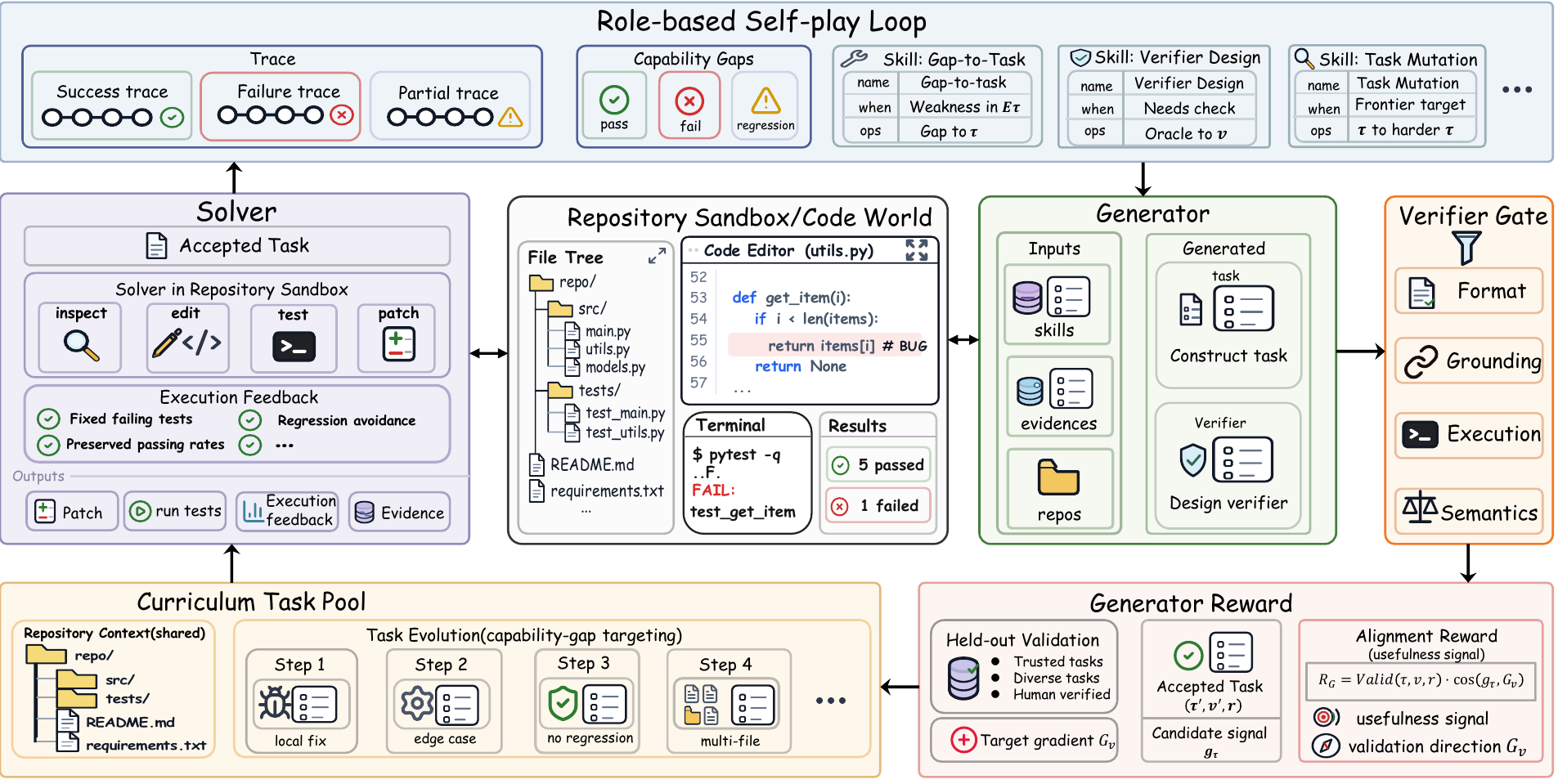}
\caption{
An overview of the Socratic-SWE framework, illustrating the co-evolution of the Solver and the Generator. \textbf{On the Solver side}, the model consumes tasks from the task pool, solves them in the repository sandbox, and produces solving traces. These traces are distilled into the Agent Skill Registry, where recurring model gaps and repository-understanding patterns are captured as reusable skills. \textbf{On the Generator side}, the model uses the Skill Registry to synthesize targeted tasks in the repository sandbox. Candidate tasks are then filtered by the Verifier Gate along format, grounding, execution, and semantics to ensure that they are well-formed, reproducible, solvable, and meaningful. Accepted tasks receive an alignment reward from the held-out validation gradient, enter the task pool as the next generation of tasks, and are consumed again by the Solver, forming a closed task-trace-skill-task loop.
}
  \label{fig:method}
\end{figure*}

Let $\mathcal{R}$ denote the repository corpus and $\mathcal{S}$ the Agent Skill Registry. At iteration $t$, the system maintains a curriculum of tasks $\mathcal{D}_t = \{(r, \tau, v)\}$, where $r \in \mathcal{R}$ is a sandboxed repository, $\tau$ is a repair task grounded in $r$, and $v$ is an executable verification signal. The Solver trains on $\mathcal{D}_t$ and produces trajectories that expose capability gaps. The Generator then uses skills $s \in \mathcal{S}$ to propose tasks targeting these gaps, and the validation pipeline filters proposals so that only executable, reproducible, and solvable tasks enter the curriculum:
\begin{align}
(\tau', v') &\sim \pi_\theta(\cdot \mid r, s, \mathcal{E}_t, \mathrm{role}{=}G), \\
\label{eq:curriculum_update}
\mathcal{D}_{t+1} &= \mathcal{D}_t \cup \{(r, \tau', v') \mid \mathcal{V}_{\mathrm{alid}}(\tau', v', r) = 1\},
\end{align}
where $\mathcal{E}_t$ denotes Solver-side evidence collected on $\mathcal{D}_t$, and $\mathcal{V}_{\mathrm{alid}}(\cdot)$ is a staged validation function defined in \S\ref{sec:method_generator}. Algorithm~\ref{alg:socratic-swe-training} details the procedure.

\subsection{Agent Skill Registry}
\label{sec:method_skills}

A central challenge in skill-guided task generation is obtaining skills that are both structured for retrieval and grounded in real agent behavior. We distill an \textbf{Agent Skill Registry} from historical interaction traces through a three-stage pipeline.

We define a \emph{skill} as a structured document with four fields: a \emph{name}, a natural-language \emph{description}, a set of \emph{applicability conditions}, and an ordered list of \emph{operations}. This representation lets the Generator retrieve and condition on skills programmatically rather than through free-form text.

\textbf{Stage 1: Trace Collection.}
We deploy $\pi_\theta$ at its current checkpoint on the seed task set and collect a trace corpus $\mathcal{T} = \{\tau_1, \ldots, \tau_N\}$. Each trace records repository inspection, code edits, command execution, and verification outcomes. We split $\mathcal{T}$ into successful traces $\mathcal{T}^{+} = \{\tau : r(\tau) = 1\}$ and failed traces $\mathcal{T}^{-} = \{\tau : r(\tau) = 0\}$, and retain both since failure traces expose capability gaps.

\textbf{Stage 2: Skill Extraction.}
A distillation model $\mathcal{M}_{\mathrm{distill}}$ processes the trace corpus and extracts recurring behavioral patterns as candidate skills. For successful traces, it identifies generalizable strategies; for failed traces, it summarizes failure lessons and corrective principles. Formally:
\begin{equation}
\label{eq:skill_extract}
\hat{\mathcal{S}} = \mathcal{M}_{\mathrm{distill}}\!\big(\mathcal{T}^{+},\, \mathcal{T}^{-}\big),
\end{equation}
where $\hat{\mathcal{S}}$ is the set of candidate skills.

\textbf{Stage 3: Registry Construction.}
Candidate skills are deduplicated by semantic similarity and filtered by trace coverage to form the Skill Registry:
\begin{equation}
\label{eq:registry}
\mathcal{S} = \mathrm{Dedup}\!\big(\hat{\mathcal{S}},\, \delta_{\mathrm{sim}}\big) = \{s_1, \ldots, s_M\},
\end{equation}
where $\delta_{\mathrm{sim}}$ is the similarity threshold and $M$ is the number of retained skills. During training, the Generator samples $s \sim \mathcal{S}$ and uses it to bias task proposal toward the behavioral pattern.
\subsection{Skill-Guided Task Generator}
\label{sec:method_generator}

When acting as Generator, $\pi_\theta$ constructs executable SWE repair tasks that expose Solver weaknesses. Given a repository $r \in \mathcal{R}$ and a skill $s \in \mathcal{S}$, the Generator produces a candidate task $\tau$ and its verification signal $v$:
\begin{equation}
\label{eq:generator_proposal}
(\tau, v) \sim \pi_\theta(\cdot \mid r, s, \mathcal{E}_t, \mathrm{role}{=}G).
\end{equation}
Here, $\tau$ specifies a repository-grounded repair objective and $v$ specifies the executable tests or commands used to evaluate patches. Conditioning on Solver evidence $\mathcal{E}_t$ makes generation adaptive as the Solver improves.

\textbf{Task Validation.}
Not all generated tasks are suitable for training. Before entering the curriculum, each candidate $(\tau, v)$ passes four checks in the repository sandbox $r$: format, grounding, execution, and semantics. A candidate is accepted iff all stages pass:
\begin{enumerate}
\item \emph{Format} ($f_1$): $\tau$ and $v$ are well-formed, parseable, and syntactically valid.
\item \emph{Grounding} ($f_2$): $\tau$ references artifacts that actually exist in $r$.
\item \emph{Execution} ($f_3$): $v$ runs without infrastructure errors and is stable across repeated runs.
\item \emph{Semantics} ($f_4$): $v$ separates failing from repaired states, and at least one valid repair exists.
\end{enumerate}
We write:
\begin{equation}
\label{eq:validation}
\mathcal{V}_{\mathrm{alid}}(\tau, v, r) = \prod_{l=1}^{4} f_l(\tau, v, r) \;\in\; \{0, 1\},
\end{equation}
where each $f_l$ is evaluated only if all preceding stages pass. Only accepted candidates enter $\mathcal{D}_{t+1}$.

\textbf{Generator Reward.}
Validation ensures a task is executable and solvable, but not useful. We reward the Generator by whether the Solver update by a task aligns with a validation direction.

We maintain a held-out set of trusted validation tasks $\mathcal{V}_{\text{val}} = \{(\tau_j^v, v_j^v, r_j^v)\}_{j=1}^{|\mathcal{V}_{\text{val}}|}$. For each validation task, we roll out $K$ Solver trajectories, compute executable feedback rewards, and estimate the per-task policy gradient:
\begin{equation}
\label{eq:val_grad}
g_j^v = \tfrac{1}{K}\textstyle\sum_{k=1}^{K} \hat{A}_{j,k}\, \nabla_\theta \log \pi_\theta(\hat{y}_{j,k} \mid \tau_j^v, r_j^v, v_j^v).
\end{equation}
Averaging over the validation set yields the target gradient direction
$G_v = \frac{1}{|\mathcal{V}_{\text{val}}|}\sum_j g_j^v$.

For each candidate task $(\tau, v)$ in repository $r$, we estimate the Solver policy gradient $g_\tau$ from $K$ rollouts. The Generator reward is:
\begin{equation}
\label{eq:generator_reward}
R_G(\tau, v, r) = \mathcal{V}_{\mathrm{alid}}(\tau, v, r) \cdot \cos(g_\tau,\, G_v).
\end{equation}
The validation factor zeros out invalid tasks, and the cosine term favors tasks whose induced updates align with the validation gradient. We recompute $G_v$ periodically as the Solver evolves.

\subsection{Repository Repair Solver}
\label{sec:method_solver}

When acting as Solver, $\pi_\theta$ solves accepted tasks by interacting with the repository environment. The Solver sees the task specification and repository feedback, but not the reference solution or verifier internals used by the Generator.

\textbf{Patch Generation.}
Given a task $(\tau, v)$ and repository $r$, the Solver samples a trajectory:
\begin{equation}
\label{eq:solver_generation}
\hat{y} \sim \pi_\theta(\cdot \mid \tau, r, v, \mathrm{role}{=}S).
\end{equation}
The trajectory may include inspection, code localization, file edits, and validation attempts.

\textbf{Executable Feedback Reward.}
For each trajectory $\hat{y}$, we apply the generated patch and run the verification suite. Let $\mathcal{F}$ denote the set of originally failing tests and $\mathcal{P}$ the set of originally passing tests; let $\mathcal{F}_{\checkmark} \subseteq \mathcal{F}$ and $\mathcal{P}_{\checkmark} \subseteq \mathcal{P}$ be the subsets that pass after patching. The Solver reward is:
\begin{equation}
\label{eq:solver_reward}
r_S = \lambda_1\, \mathbf{1}\!\left[\mathcal{F}_{\checkmark} {=} \mathcal{F} \wedge \mathcal{P}_{\checkmark} {=} \mathcal{P}\right] + \lambda_2\, \frac{|\mathcal{F}_{\checkmark}|}{|\mathcal{F}|} + \lambda_3\, \frac{|\mathcal{P}_{\checkmark}|}{|\mathcal{P}|},
\end{equation}
where the three terms reward full-suite pass, partial repair rate, and regression avoidance.

\subsection{Training with Role-Specific Objectives}
\label{sec:method_training}

We optimize both roles jointly with shared weights:
\begin{align}
\label{eq:joint_objective}
J(\theta) = \;&\mathbb{E}_{r,s}\!\big[\mathbb{E}_{(\tau,v) \sim \pi_G}\![R_G(\tau,v,r)]\big] \notag \\
+\;&\mathbb{E}_{\tau,r,v}\!\big[\mathbb{E}_{\hat{y} \sim \pi_S}\![r_S(\hat{y},\tau,v,r)]\big],
\end{align}
where $\pi_G(\cdot \mid r, s) := \pi_\theta(\cdot \mid r, s, \mathrm{role}{=}G)$ and
$\pi_S(\cdot \mid \tau, r, v) := \pi_\theta(\cdot \mid \tau, r, v, \mathrm{role}{=}S)$.

We optimize a clipped surrogate objective:
\begin{equation}
\label{eq:grpo_objective}
\mathcal{L}(\theta) = \mathbb{E}\!\left[\tfrac{1}{K}\textstyle\sum_{i=1}^{K} \hat{\mathcal{L}}_i - \beta\, D_{\mathrm{KL}}[\pi_\theta \,\|\, \pi_{\mathrm{ref}}]\right],
\end{equation}
where $\hat{\mathcal{L}}_i = \min\!\big(\rho_i \hat{A}_i,\; \mathrm{clip}(\rho_i, 1{-}\epsilon, 1{+}\epsilon)\,\hat{A}_i\big)$ and
$\rho_i = \frac{\pi_\theta(o_i \mid x)}{\pi_{\theta_{\mathrm{old}}}(o_i \mid x)}$ is the importance ratio.

For the Generator, which receives a single scalar reward $R_G$ (Eq.~\eqref{eq:generator_reward}), we apply GRPO with $R_i := R_G^i$. The Solver reward $r_S$ (Eq.~\eqref{eq:solver_reward}) combines three heterogeneous components ($\mathit{pass}$, $\mathit{repair}$, $\mathit{regr}$) on different scales. We adopt GDPO~\citep{liu2026gdpo}, which normalizes each component within its own group before aggregation:
\begin{equation}
\label{eq:gdpo_per_reward}
\hat{A}^{(m)}_i = \frac{r_S^{(m,i)} - \mathrm{mean}(\{r_S^{(m,j)}\}_{j=1}^K)}{\mathrm{std}(\{r_S^{(m,j)}\}_{j=1}^K) + \delta},
\end{equation}
\begin{equation}
\label{eq:gdpo_objective}
\hat{A}_{S}^{i} = \mathrm{BatchNorm}\!\Big(\textstyle\sum_{m=1}^{3} \hat{A}^{(m)}_i\Big),
\end{equation}
where $m \in \{1,2,3\}$ indexes reward components and $\mathrm{BatchNorm}$ rescales advantages across the batch. This normalization lets the Solver distinguish partial repair from full fix and yields informative gradients from a single summed reward. Both roles share the clipped objective in Eq.~\eqref{eq:grpo_objective}, with $\hat{A}_i := \hat{A}_{S}^{i}$ for the Solver. The total loss is $\mathcal{L}(\theta) = \mathcal{L}_G(\theta) + \mathcal{L}_S(\theta)$.
\section{Experiments}
\label{sec:experiments}

\subsection{Experimental Setup}
\newcommand{\mnum}[1]{\makebox[5.2em][l]{#1}}
\newcommand{\mdelta}[1]{\makebox[4.2em][l]{#1}}

\begin{table*}[!t]
\centering
\footnotesize
\setlength{\tabcolsep}{3.5pt}
\renewcommand{\arraystretch}{0.88}
\arrayrulecolor{RuleGray}

\caption{
Main results on software engineering benchmarks.
All self-evolving methods use identical Solver (Qwen3.5-9B), harness (mini-swe-agent), and training budget (12k instances $\times$ 3 iterations).
TB2 is evaluated with little-coder.
Overall is the mean score across four benchmarks.
Green and red arrows indicate improvements and degradations relative to the Base Agent, respectively.
}
\label{tab:main_results}

\resizebox{\textwidth}{!}{
\begin{tabular}{l|c|cccc|c}
\specialrule{1.1pt}{0pt}{0pt}
\multicolumn{1}{c|}{} &
\multicolumn{1}{c|}{} &
\multicolumn{4}{c|}{\textbf{Benchmarks}} &
\multicolumn{1}{c}{} \\
\cmidrule(lr){3-6}
\textbf{Method} &
\cellcolor{OverallGray}\textbf{Overall} &
\makecell{\textbf{SWE-bench}\\\textbf{Verified}} &
\makecell{\textbf{SWE-bench}\\\textbf{Lite}} &
\makecell{\textbf{SWE-bench}\\\textbf{Pro}} &
\makecell{\textbf{Terminal-Bench}\\\textbf{2.0}} &
\makecell{\textbf{$\Delta$ vs.}\\\textbf{Base}} \\
\midrule

Qwen3.5-9B & & & & & & \\

\quad + Base Agent & \mnum{\ogray{24.91}} & \mnum{42.60} & \mnum{29.67} & \mnum{17.24} & \mnum{10.11} & \mdelta{---} \\

\midrule
\rowcolor{BlockLavender}
\emph{R-Zero} & & & & & & \\
{\scriptsize\quad + Iteration 1} & \mnum{\ogray{\scriptsize 25.29}} & \mnum{\scriptsize 43.20\uar{0.60}} & \mnum{\scriptsize 30.33\uar{0.66}} & \mnum{\scriptsize 17.51\uar{0.27}} & \mnum{\scriptsize 10.11\nodiff} & \mdelta{\scriptsize \uar{0.38}} \\
{\scriptsize\quad + Iteration 2} & \mnum{\ogray{\scriptsize 24.25}} & \mnum{\scriptsize 42.00\dar{0.60}} & \mnum{\scriptsize 29.33\dar{0.34}} & \mnum{\scriptsize 16.69\dar{0.55}} & \mnum{\scriptsize 8.99\dar{1.12}} & \mdelta{\scriptsize \dar{0.66}} \\
{\scriptsize\quad + Iteration 3} & \mnum{\ogray{\scriptsize 24.02}} & \mnum{\scriptsize 41.80\dar{0.80}} & \mnum{\scriptsize 29.00\dar{0.67}} & \mnum{\scriptsize 16.28\dar{0.96}} & \mnum{\scriptsize 8.99\dar{1.12}} & \mdelta{\scriptsize \dar{0.89}} \\

\midrule
\rowcolor{BlockLavender}
\emph{SPIRAL} & & & & & & \\
{\scriptsize\quad + Iteration 1} & \mnum{\ogray{\scriptsize 25.67}} & \mnum{\scriptsize 43.80\uar{1.20}} & \mnum{\scriptsize 31.00\uar{1.33}} & \mnum{\scriptsize 17.78\uar{0.54}} & \mnum{\scriptsize 10.11\nodiff} & \mdelta{\scriptsize \uar{0.76}} \\
{\scriptsize\quad + Iteration 2} & \mnum{\ogray{\scriptsize 26.29}} & \mnum{\scriptsize 44.40\uar{1.80}} & \mnum{\scriptsize 31.33\uar{1.66}} & \mnum{\scriptsize 18.19\uar{0.95}} & \mnum{\scriptsize 11.24\uar{1.13}} & \mdelta{\scriptsize \uar{1.38}} \\
{\scriptsize\quad + Iteration 3} & \mnum{\ogray{\scriptsize 25.72}} & \mnum{\scriptsize 44.00\uar{1.40}} & \mnum{\scriptsize 31.00\uar{1.33}} & \mnum{\scriptsize 17.78\uar{0.54}} & \mnum{\scriptsize 10.11\nodiff} & \mdelta{\scriptsize \uar{0.81}} \\

\midrule
\rowcolor{BlockLavender}
\emph{Absolute-Zero} & & & & & & \\
{\scriptsize\quad + Iteration 1} & \mnum{\ogray{\scriptsize 26.21}} & \mnum{\scriptsize 44.20\uar{1.60}} & \mnum{\scriptsize 31.33\uar{1.66}} & \mnum{\scriptsize 18.06\uar{0.82}} & \mnum{\scriptsize 11.24\uar{1.13}} & \mdelta{\scriptsize \uar{1.30}} \\
{\scriptsize\quad + Iteration 2} & \mnum{\ogray{\scriptsize 26.75}} & \mnum{\scriptsize 45.00\uar{2.40}} & \mnum{\scriptsize 32.00\uar{2.33}} & \mnum{\scriptsize 18.74\uar{1.50}} & \mnum{\scriptsize 11.24\uar{1.13}} & \mdelta{\scriptsize \uar{1.84}} \\
{\scriptsize\quad + Iteration 3} & \mnum{\ogray{\scriptsize 26.01}} & \mnum{\scriptsize 44.40\uar{1.80}} & \mnum{\scriptsize 31.33\uar{1.66}} & \mnum{\scriptsize 18.19\uar{0.95}} & \mnum{\scriptsize 10.11\nodiff} & \mdelta{\scriptsize \uar{1.10}} \\

\midrule
\rowcolor{BlockLavender}
\emph{Socratic-Zero} & & & & & & \\
{\scriptsize\quad + Iteration 1} & \mnum{\ogray{\scriptsize 26.83}} & \mnum{\scriptsize 45.20\uar{2.60}} & \mnum{\scriptsize 32.00\uar{2.33}} & \mnum{\scriptsize 18.88\uar{1.64}} & \mnum{\scriptsize 11.24\uar{1.13}} & \mdelta{\scriptsize \uar{1.92}} \\
{\scriptsize\quad + Iteration 2} & \mnum{\ogray{\scriptsize 27.87}} & \mnum{\scriptsize 46.40\uar{3.80}} & \mnum{\scriptsize 33.00\uar{3.33}} & \mnum{\scriptsize 19.70\uar{2.46}} & \mnum{\scriptsize 12.36\uar{2.25}} & \mdelta{\scriptsize \uar{2.96}} \\
{\scriptsize\quad + Iteration 3} & \mnum{\ogray{\scriptsize 27.57}} & \mnum{\scriptsize 45.80\uar{3.20}} & \mnum{\scriptsize 32.67\uar{3.00}} & \mnum{\scriptsize 19.43\uar{2.19}} & \mnum{\scriptsize 12.36\uar{2.25}} & \mdelta{\scriptsize \uar{2.66}} \\

\midrule
\rowcolor{BlockLavender}
\emph{SSR} & & & & & & \\
{\scriptsize\quad + Iteration 1} & \mnum{\ogray{\scriptsize 26.31}} & \mnum{\scriptsize 44.20\uar{1.60}} & \mnum{\scriptsize 31.33\uar{1.66}} & \mnum{\scriptsize 18.47\uar{1.23}} & \mnum{\scriptsize 11.24\uar{1.13}} & \mdelta{\scriptsize \uar{1.40}} \\
{\scriptsize\quad + Iteration 2} & \mnum{\ogray{\scriptsize 27.63}} & \mnum{\scriptsize 45.80\uar{3.20}} & \mnum{\scriptsize 32.67\uar{3.00}} & \mnum{\scriptsize 19.70\uar{2.46}} & \mnum{\scriptsize 12.36\uar{2.25}} & \mdelta{\scriptsize \uar{2.72}} \\
{\scriptsize\quad + Iteration 3} & \mnum{\ogray{\scriptsize 28.51}} & \mnum{\scriptsize 47.00\uar{4.40}} & \mnum{\scriptsize 34.00\uar{4.33}} & \mnum{\scriptsize 20.66\uar{3.42}} & \mnum{\scriptsize 12.36\uar{2.25}} & \mdelta{\scriptsize \uar{3.60}} \\

\midrule
\rowcolor{BlockLavender}
\emph{Socratic-SWE (Ours)} & & & & & & \\
{\scriptsize\quad + Iteration 1} & \mnum{\ogray{\scriptsize 27.82}} & \mnum{\scriptsize 46.20\uar{3.60}} & \mnum{\scriptsize 33.00\uar{3.33}} & \mnum{\scriptsize 19.70\uar{2.46}} & \mnum{\scriptsize 12.36\uar{2.25}} & \mdelta{\scriptsize \uar{2.91}} \\
{\scriptsize\quad + Iteration 2} & \mnum{\ogray{\scriptsize 29.64}} & \mnum{\scriptsize 48.40\uar{5.80}} & \mnum{\scriptsize 35.33\uar{5.66}} & \mnum{\scriptsize 21.34\uar{4.10}} & \mnum{\scriptsize 13.48\uar{3.37}} & \mdelta{\scriptsize \uar{4.73}} \\
\rowcolor{OursPink}
{\scriptsize\quad + \textbf{Iteration 3}} &
\cellcolor{OursPink}{\mnum{\scriptsize\textbf{31.13}}} &
\mnum{\scriptsize\textbf{50.40\uar{7.80}}} &
\mnum{\scriptsize\textbf{36.67\uar{7.00}}} &
\mnum{\scriptsize\textbf{22.85\uar{5.61}}} &
\mnum{\scriptsize\textbf{14.61\uar{4.50}}} &
\mdelta{\scriptsize\textbf{\uar{6.22}}} \\

\specialrule{1.1pt}{0pt}{0pt}
\end{tabular}
}
\end{table*}
\noindent\textbf{Models.}\hspace{0.35em}
Qwen3.5-9B \cite{qwen2026qwen35} was used for both the Generator and the Solver. The Generator was optimized with GRPO~\cite{shao2024deepseekmath} from gradient feedback. Qwen3.6-27B~\cite{qwen2026qwen3627b} was used to distill skills. The Solver was trained with GDPO~\cite{liu2026gdpo} on data generated by Socratic-SWE, with weights shared with the Generator.

\noindent\textbf{Datasets and Benchmarks.}\hspace{0.35em}
All self-evolving methods ran for 3 iterations, generating 12k validated training instances per iteration (36k total). For baselines that require seed tasks, 10\% of SWE-smith~\cite{yang2025swesmith} was used as the seed dataset; Socratic-SWE requires seed repositories. The trusted validation set $\mathcal{V}_{\mathrm{val}}$ was a fixed held-out subset of BeyondSWE~\cite{chen2026beyondswe}, used for generator-gradient alignment. All methods were evaluated on four benchmarks: SWE-bench Verified~\cite{jimenez2024swebench} (500 human-verified repository-level issues), SWE-bench Lite~\cite{jimenez2024swebench} (300 filtered issues), SWE-bench Pro Public~\cite{deng2025swebenchpro} (731 complex enterprise-level software problems), and Terminal-Bench 2.0 (TB2)~\cite{merrill2026terminalbench} (terminal-native task completion in sandboxed environments).

\noindent\textbf{Agent Harness.}\hspace{0.35em}
mini-swe-agent~\cite{sweagent2025minisweagent} was used as the execution harness for SWE benchmarks, with only Bash exposed to reduce confounds from tool design. For Terminal-Bench 2.0, we used little-coder~\cite{inbar2026littlecoder} as the terminal-agent evaluation harness.

\noindent\textbf{Evaluation.}\hspace{0.35em}
For SWE-bench Verified, Lite, and Pro, each agent interacted with the repository and submitted a patch validated by benchmark tests. For TB2, each agent was given an instruction and a sandboxed terminal, then required to reach a final state that passed the verifier. TB2 was evaluated using little-coder~\cite{inbar2026littlecoder}. Pass rate was reported as the primary metric across all benchmarks.

\noindent\textbf{Baseline Methods.}\hspace{0.35em}
We compare Socratic-SWE with the base agent and five self-evolving baselines: SPIRAL~\cite{liu2025spiral}, R-Zero~\cite{huang2026rzero}, Absolute-Zero~\cite{zhao2025absolutezero}, Socratic-Zero~\cite{wang2025socraticzero}, and SSR~\cite{wei2025selfplayswerl}. All methods use the same Solver architecture (Qwen3.5-9B), agent harness (mini-swe-agent), Terminal-Bench harness (little-coder), benchmarks, and interaction budget. Details of the SWE adaptations and baseline initialization are provided in Appendix~\ref{app:baseline_details}.

\subsection{Main Results}
\label{sec:main_results}

\noindent\textbf{Overall Performance.}\hspace{0.35em}
As shown in Table~\ref{tab:main_results}, Socratic-SWE achieves the strongest results across all four benchmarks after 3 iterations. With a fixed 36k-instance budget and \emph{zero} pre-existing SWE training tasks, it improves over the base agent by +6.22 points overall, reaching 50.40\% on SWE-bench Verified (+7.80), 36.67\% on Lite (+7.00), 22.85\% on Pro (+5.61), and 14.61\% on Terminal-Bench 2.0 (+4.50). Its Verified gain also grows steadily across iterations (+3.60, +5.80, +7.80), while several baselines saturate or regress.

\noindent\textbf{Self-Play Methods Struggle on SWE Tasks.}\hspace{0.35em}
General self-play methods are brittle when adapted to SWE. R-Zero, which uses majority vote as reward, briefly improves at Iteration~1 (+0.60 on Verified) but falls below the base agent ($-$0.80), suggesting that vote-based rewards are too noisy for partial repairs. SPIRAL and Absolute-Zero peak at Iteration~2 (+1.80 and +2.40 on Verified) and regress at Iteration~3, indicating that self-play without execution-grounded validation can lead to mode collapse or reward hacking.

\noindent\textbf{SSR Improves Steadily but Remains Bounded.}\hspace{0.35em}
SSR is the strongest baseline, reaching +4.40 on SWE-bench Verified after 3 iterations. Its bug-injection mechanism and execution checks keep generated tasks valid, but its gains lack the acceleration of Socratic-SWE. This suggests that, without skill-guided targeting of capability gaps, SSR exhausts low-hanging bug patterns.

\noindent\textbf{Teacher-Guided Co-evolution Saturates Early.}\hspace{0.35em}
Socratic-Zero, despite using a 397B-parameter Teacher, peaks at Iteration~2 (+3.80 on Verified) and slightly declines at Iteration~3. This suggests that teacher-guided task generation remains bounded by the Teacher's domain understanding and may lose information during Generator distillation.

\noindent\textbf{Terminal-Native Generalization.}\hspace{0.35em}
On Terminal-Bench 2.0, only SSR and Socratic-SWE show meaningful gains (+2.25 and +4.50). Socratic-SWE's stronger transfer suggests that trace-derived skills capture general agent behaviors across tasks such as file manipulation, command chaining, and environment navigation, rather than only issue-specific repair patterns.


\section{Analysis and Ablation Studies}
\label{sec:analysis}

\subsection{Iteration Scaling and Saturation}
\label{sec:iteration_scaling}

To study long-term scaling, we extend training to 5 iterations for Socratic-SWE and SSR. Figure~\ref{fig:iteration_scaling} shows three regimes: Socratic-SWE improves rapidly in Iterations 1--3 (+3.60, +2.20, +2.00) and still gains in Iteration 4 (+1.20, reaching 51.60\%), while SSR improves more slowly (+1.60, +1.60, +1.20, then +0.80). By Iteration 5, both methods nearly plateau, at 52.00\% for Socratic-SWE and 48.00\% for SSR.

\begin{figure}[t]
  \centering
  \includegraphics[width=\linewidth]{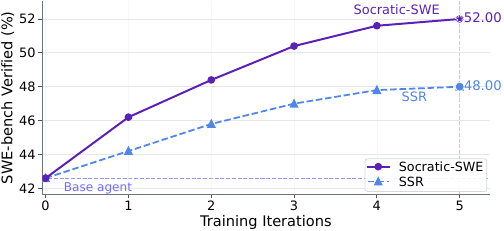}
\caption{
SWE-bench Verified scaling beyond the first three iterations. Socratic-SWE improves faster and reaches a higher ceiling, while SSR plateaus earlier.
}
  \label{fig:iteration_scaling}
\end{figure}

This saturation reflects the closed-world setup. The Agent Skill Registry increasingly covers the seed-repository space, leaving fewer gaps to target, while the fixed repository pool yields redundant training signal. Even so, Socratic-SWE saturates 2 iterations later and at a higher ceiling than SSR, showing that skill-guided curriculum generation extracts more signal from the same data. Expanding the repository pool or enabling cross-repository transfer may extend this trend.

\subsection{Ablation of Framework Components}
\label{sec:ablation_components}

We ablate three components on SWE-bench Verified at Iteration 3 (Table~\ref{tab:ablation}): removing the Skill Registry, replacing trace distillation with manual skills, and replacing GDPO with GRPO.

\begin{table}[t]
\centering
\footnotesize
\setlength{\tabcolsep}{4.5pt}
\renewcommand{\arraystretch}{0.98}
\arrayrulecolor{RuleGray}

\caption{
Ablation results on SWE-bench Verified at Iteration 3. The upper block removes key Socratic-SWE components, and the lower block varies the skill extractor. The results show that each component helps and that performance is robust across extractor scales.
}
\label{tab:ablation}

\begin{tabular}{lrr}
\specialrule{1.1pt}{0pt}{0pt}
\textbf{Variant} & \textbf{Verified} & \textbf{$\Delta$ vs. Full} \\
\midrule

\rowcolor{BlockLavender}
\textbf{Socratic-SWE} & \vnum{\textbf{50.40}}\varr{\phantom{\dar{0.60}}} & \varr{\textbf{--}} \\
\quad w/o Skill Registry & \vnum{46.20}\varr{\dar{4.20}} & \varr{\dar{4.20}} \\
\quad w/o Trace Distillation & \vnum{48.00}\varr{\dar{2.40}} & \varr{\dar{2.40}} \\
\quad w/o GDPO, using GRPO & \vnum{48.60}\varr{\dar{1.80}} & \varr{\dar{1.80}} \\

\midrule
\multicolumn{3}{l}{\textit{Skill extraction model}} \\
\quad Qwen3.5-9B self-extraction & \vnum{49.80}\varr{\dar{0.60}} & \varr{\dar{0.60}} \\
\quad Qwen3.6-27B default & \vnum{50.40}\varr{\nodiff} & \varr{\nodiff} \\
\quad Claude Opus 4.5 & \vnum{51.00}\varr{\uar{0.60}} & \varr{\uar{0.60}} \\
\bottomrule
\end{tabular}
\end{table}

Removing the Skill Registry causes the largest drop, showing that curriculum design is the main driver of Socratic-SWE. Replacing trace distillation with manual skills hurts performance because hand-written taxonomies miss behaviors such as iterative grep-then-edit patterns. Replacing GDPO with GRPO also degrades results, suggesting that decomposed reward learning better handles partially correct patches than binary pass/fail signals.

\subsection{Robustness of Skill Extraction}
\label{sec:ablation_extraction}

Socratic-SWE is not sensitive to the skill extractor. Even the smallest extractor, Qwen3.5-9B, reaches 49.80\% on Verified, only 0.60 points below the full system. Qwen3.6-27B gives a small gain, and Claude Opus 4.5 adds another +0.60. This suggests that the core advantage comes from the framework itself---skill-gap-targeted generation plus execution-grounded validation---rather than from a particularly strong extractor. As long as skills come from real traces and are validated by tests, coarse descriptions are sufficient to guide effective curriculum generation.
\section{Conclusion}
\label{sec:conclusion}

We presented \textsc{Socratic-SWE}, a practical closed-loop framework for self-evolving software engineering agents under limited access to high-quality SWE task data. By reusing historical solving traces as training signal and distilling them into an Agent Skill Registry, \textsc{Socratic-SWE} generates targeted repair tasks that address capability gaps and track the model's frontier. Across four benchmarks, it gains +7.80 points on SWE-bench Verified and +4.50 points on Terminal-Bench 2.0 after three iterations, outperforming five self-play baselines under identical compute budgets. Analyses show that skill-guided curricula delay saturation and better exploit finite repository pools, while ablations confirm the roles of the skill registry, trace distillation, and GDPO. Future work may explore dynamic repository augmentation, cross-domain skill transfer, and online skill discovery. 
\section*{Limitations}
\label{sec:limitations}

Socratic-SWE is evaluated in a closed-world setting with a fixed pool of seed repositories. As the Agent Skill Registry becomes more complete, later iterations have fewer novel capability gaps to target, which makes task generation increasingly redundant. Our scaling analysis therefore reflects curriculum behavior under a fixed repository distribution rather than fully open-ended improvement with continuously refreshed data.

The method also relies on a held-out validation set to define the generator-gradient alignment reward. While this design improves curriculum quality, it introduces dependence on the choice of trusted validation tasks and may limit robustness if the validation distribution is not representative of the target deployment setting.

In addition, Socratic-SWE assumes executable verification and sandboxed repository interaction. Its gains may therefore not transfer directly to settings without reliable tests, deterministic execution, or clear task-level validation. Finally, our evaluation covers four benchmarks in SWE and terminal-agent settings; broader transfer to other programming languages, repository styles, and software engineering workflows remains to be established.

\clearpage

\bibliographystyle{unsrt}
\bibliography{references}

\clearpage

\beginappendix

\section{Training Algorithm}
\label{app:training_algorithm}

Algorithm~\ref{alg:socratic-swe-training} summarizes the role-conditioned
Generator--Solver training loop used in Socratic-SWE.

\begin{algorithm*}[t]
\caption{Socratic-SWE: Self-Play in Repository Environments}
\label{alg:socratic-swe-training}
\begin{algorithmic}[1]
\Require Shared policy $\pi_\theta$; repository corpus $\mathcal{R}$; Agent Skill Registry $\mathcal{S}$; curriculum $\mathcal{D}_0$; batch size $B$; group size $K$; iterations $T$; held-out validation set $\mathcal{V}$
\For{$t \gets 1$ to $T$}
  \tikzmk{A}\State \textbf{Generator Phase:} Construct skill-guided SWE repair tasks
  \State $\mathcal{E}_t \gets$ Solver outcomes on $\mathcal{D}_t$
  \For{$b \gets 1$ to $B$}
    \State Sample $r \sim \mathcal{R}$, retrieve skill $s \sim \mathcal{S}$
    \State $\{(\tau_k, v_k)\}_{k=1}^{K} \gets \pi_\theta(\cdot \mid r, s, \mathcal{E}_t, \mathrm{role}{=}G)$ \Comment{$K$ candidate tasks}
    \For{$k \gets 1$ to $K$}
      \State Validate $(\tau_k, v_k)$ in sandbox $r$ \Comment{Format $\to$ grounding $\to$ execution $\to$ semantics}
      \If{$\mathcal{V}\mathrm{alid}(\tau_k, v_k, r) = 1$}
        \State Add $(\tau_k, v_k, r)$ to $\mathcal{D}_{t+1}$
      \EndIf
    \EndFor
  \EndFor
  \tikzmk{B}
  \boxit{algorange}
  \tikzmk{A}\State \textbf{Solver Phase:} Solve accepted tasks in repositories
  \For{accepted task $(\tau, v, r) \in \mathcal{D}_{t+1}$}
    \State $\{\hat{y}_i\}_{i=1}^{K} \sim \pi_\theta(\cdot \mid \tau, r, v, \mathrm{role}{=}S)$ \Comment{$K$ patch trajectories}
    \For{$i \gets 1$ to $K$}
      \State Apply patch $\hat{y}_i$, run verifier $v$, compute $r_S(\hat{y}_i, \tau, v, r)$ via Eq.~\eqref{eq:solver_reward}
    \EndFor
  \EndFor
  \tikzmk{B}
  \boxittwo{algpink}
  \tikzmk{A}\State \textbf{Update Phase:} Role-specific policy optimization
  \State Compute $g_\tau$ and $g_{\mathrm{val}}$ (recomputed on $\mathcal{V}$ periodically)
  \State Compute $R_G(\tau_k, v_k, r)$ via Eq.~\eqref{eq:generator_reward} for Generator group
  \State Compute $\hat{A}_G^k$ from $R_G$ and update $\pi_\theta$ via $\mathcal{L}_G$ using Eq.~\eqref{eq:grpo_objective}
  \State Compute $\hat{A}_S^i$ via Eq.~\eqref{eq:gdpo_objective}, update $\pi_\theta$ via $\mathcal{L}_S$
  \tikzmk{B}
  \boxithree{algpurple}
\EndFor
\State \Return Trained SWE agent $\pi_\theta$
\end{algorithmic}
\end{algorithm*}

\section{Baseline Adaptation Details}
\label{app:baseline_details}

We compare Socratic-SWE with the base agent and five representative self-evolving methods adapted to the SWE setting. All methods use the same Solver architecture (Qwen3.5-9B), the same SWE benchmark harness (mini-swe-agent), the same Terminal-Bench harness (little-coder), and the same interaction budget. Each method runs for 3 iterations, generating 12k validated training instances per iteration (36k total). Baselines that require seed tasks are initialized from 10\% of SWE-smith task instances; Socratic-SWE uses only seed repositories and no pre-existing SWE task instances or repair trajectories.

\begin{itemize}[leftmargin=1.2em,itemsep=1pt,topsep=2pt]
\item \textbf{SPIRAL}~\cite{liu2025spiral}: A multi-agent multi-turn self-play framework originally designed for zero-sum language games. We adapt it to the SWE setting by modeling repository-level repair as a two-player zero-sum game: one agent injects code defects while the other repairs them, trained with role-conditioned advantage estimation (RAE).

\item \textbf{R-Zero}~\cite{huang2026rzero}: A Challenger--Solver co-evolution framework that requires zero external data. The Challenger proposes SWE tasks and the Solver generates repair patches. Unlike execution-grounded methods, R-Zero uses \emph{majority vote} across multiple Solver rollouts as the reward signal for both roles.

\item \textbf{Absolute-Zero}~\cite{zhao2025absolutezero}: A single-model self-play paradigm where the agent simultaneously proposes and solves coding tasks with execution-based verification. We adapt it to repository environments, where the model freely proposes code-level repair tasks and solves them using the code executor as verifiable reward, without structured bug artifacts or skill guidance.

\item \textbf{Socratic-Zero}~\cite{wang2025socraticzero}: A Teacher--Solver--Generator tri-role co-evolution framework originally designed for mathematical reasoning. We adapt it to SWE by using Qwen3.5-397B~\cite{qwen2026qwen35} as the Teacher to construct increasingly challenging SWE tasks from the seed set, with the Generator distilling the Teacher's task-design strategy for scalable curriculum generation.

\item \textbf{SSR}~\cite{wei2025selfplayswerl}: Self-play SWE-RL, a native SWE self-play method where a single LLM alternates between injecting bugs (via code removal or history reversion) and repairing them in real repository environments. Bug artifacts are validated through execution-based consistency checks and the Solver receives binary pass/fail rewards.
\end{itemize}


\section{Theoretical Justification of Gradient-Aligned Generator Reward}
\label{app:grad_proof}

In \S\ref{sec:method_generator}, the Generator is rewarded via the cosine similarity between the policy gradient induced by a candidate task and the aggregated validation gradient. We provide a formal justification for this design.

\subsection{Setup}

Let $\pi_\theta$ be the shared policy, $\mathcal{V}_{\text{val}}$ a held-out set of trusted SWE tasks, and $J_{\text{val}}(\theta) = \mathbb{E}_{(\tau,v,r) \sim \mathcal{V}_{\text{val}},\, \hat{y} \sim \pi_\theta}\!\left[r_S(\hat{y}, \tau, v, r)\right]$ the expected solve rate on the validation set. The RL objective follows GRPO~\cite{shao2024deepseekmath}.

\subsection{Validation-Gradient Estimation}

\begin{theorem}[GRPO gradient as solve-rate estimator]
\label{thm:grad_unbiased}
Under on-policy sampling with binary episodic reward $R(\hat{y}, \tau) = \mathbf{1}[\text{all tests pass}]$ and without clipping or KL regularization, the GRPO gradient on $\mathcal{V}_{\mathrm{val}}$ is unbiased for $\nabla_\theta J_{\mathrm{val}}(\theta)$.
\end{theorem}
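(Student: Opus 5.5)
The plan is to reduce the claim to the score-function (REINFORCE) identity, applied task by task and then averaged over $\mathcal{V}_{\mathrm{val}}$.

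\emph{Step 1: the per-task gradient.} Fix a validation task $j$ with context $x_j=(\tau_j^v,r_j^v,v_j^v)$. Let $p_j(\theta)=\mathbb{E}_{\hat y\sim\pi_\theta(\cdot\mid x_j)}[R(\hat y,\tau_j^v)]$ be its solve probability. Under the binary reward, $r_S$ reduces to $R$, so $J_{\mathrm{val}}(\theta)=\frac{1}{|\mathcal{V}_{\mathrm{val}}|}\sum_j p_j(\theta)$. By the log-derivative trick,
\[
\nabla_\theta p_j(\theta)=\mathbb{E}_{\hat y\sim\pi_\theta}\bigl[R(\hat y,\tau_j^v)\,\nabla_\theta\log\pi_\theta(\hat y\mid x_j)\bigr].
\]
I would justify exchanging gradient and expectation by noting that the trajectory space is finite (bounded token length) and $\pi_\theta$ is differentiable. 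Without clipping or KL, and on-policy so that $\rho_i=1$, the surrogate gradient of Eq.~\eqref{eq:grpo_objective} at $\theta=\theta_{\mathrm{old}}$ is exactly $\frac1K\sum_k\hat A_{j,k}\nabla_\theta\log\pi_\theta(\hat y_{j,k}\mid x_j)$, which is $g_j^v$ in Eq.~\eqref{eq:val_grad}.

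\emph{Step 2: the baseline term.} Write $\hat A_{j,k}=R_{j,k}-b_{j,k}$. If $b_{j,k}$ is independent of $\hat y_{j,k}$, for instance the leave-one-out mean $\frac{1}{K-1}\sum_{i\ne k}R_{j,i}$, then conditioning on the other samples and using $\mathbb{E}[\nabla\log\pi_\theta]=0$ makes the baseline term vanish. Hence $\mathbb{E}[g_j^v]=\nabla p_j$. If instead the mean includes the sample itself, the self-term contributes $-\frac1K R_{j,k}\nabla\log\pi$. Then $\mathbb{E}[g_j^v]=\frac{K-1}{K}\nabla p_j$, which is unbiased up to a known constant. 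Since the constant is common to all tasks, it also leaves $\cos(g_\tau,G_v)$ unchanged.

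\emph{Step 3: averaging over tasks.} Linearity gives $\mathbb{E}[G_v]=\frac{1}{|\mathcal{V}_{\mathrm{val}}|}\sum_j\nabla p_j=\nabla J_{\mathrm{val}}$. Uniform sampling of tasks from $\mathcal{V}_{\mathrm{val}}$ gives the same conclusion by the tower property.

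\emph{Main obstacle: the std normalization.} Standard GRPO divides by the group standard deviation. That quantity depends on $\hat y_{j,k}$ itself and varies across tasks, and both effects break exact unbiasedness. Two fixes are available.
\begin{itemize}
\item Read the theorem for the mean-centered advantage, in Dr.~GRPO style without the std. This is consistent with the stated ``no clipping or KL'' idealization.
\item Replace the empirical std by the population value $\sigma_j=\sqrt{p_j(1-p_j)}$. Steps 1--3 then show that $G_v$ is unbiased for $\frac{1}{|\mathcal{V}_{\mathrm{val}}|}\sum_j\sigma_j^{-1}\nabla p_j$. This equals $\nabla J_{\mathrm{val}}$ only when the $\sigma_j$ coincide across tasks, and otherwise it is a positively reweighted ascent direction.
\end{itemize}
I would try the first reading, state the second as a remark, and also note that groups with all-equal rewards contribute zero under either convention.
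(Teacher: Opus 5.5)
Your skeleton matches the paper's proof: the policy-gradient identity with a baseline, GRPO's group mean standing in for $b(\tau)$, then linearity over $\mathcal{V}_{\mathrm{val}}$. Like your first bullet, the paper reads the theorem for the mean-centered advantage and handles the std division separately in Theorem~\ref{thm:direction_preserves}. The real difference is Step~2, and there your version is correct and the paper's is not. The paper argues that because $\bar r$ is an unbiased estimate of $b(\tau)$, substituting it gives an unbiased estimator. That overlooks the fact that $\bar r$ contains $R_k$, so it is correlated with $\nabla_\theta\log\pi_\theta(\hat y_k\mid\tau)$. With $\bar r_{-k}=\frac{1}{K-1}\sum_{i\neq k}R_i$ one has the exact identity
\[
\frac{1}{K}\sum_{k=1}^{K}(R_k-\bar r)\,\nabla_\theta\log\pi_\theta(\hat y_k\mid\tau)=\frac{K-1}{K}\cdot\frac{1}{K}\sum_{k=1}^{K}(R_k-\bar r_{-k})\,\nabla_\theta\log\pi_\theta(\hat y_k\mid\tau).
\]
So the self-inclusive GRPO estimator has mean $\frac{K-1}{K}\nabla_\theta J_{\mathrm{val}}$. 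Exact unbiasedness holds only for the leave-one-out baseline, or after rescaling by $K/(K-1)$. The literal claim is false for finite $K$ whenever $\nabla_\theta J_{\mathrm{val}}\neq 0$, so state the result in one of your corrected forms rather than trying to recover it. Because the identity holds realization by realization, it also settles your cosine remark directly: $\cos(g_\tau,G_v)$ is the same under either baseline.

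Your std discussion is also a fair criticism of how the paper patches this. Theorem~\ref{thm:direction_preserves} treats $\sigma_A$ as a fixed per-group constant. That preserves the direction of each per-task $g_j^v$, but not of the average $G_v$ when the $\sigma_j$ differ across tasks. The empirical $\sigma_A$ is also random. So with the std division, $G_v$ is in general neither unbiased for nor parallel in expectation to $\nabla_\theta J_{\mathrm{val}}$.

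Your second bullet can be sharpened. For $p_j\in(0,1)$ and ignoring $\delta$, $\sigma_j^{-1}\nabla_\theta p_j=\nabla_\theta\,2\arcsin\sqrt{p_j(\theta)}$. Hence, with the population std, $G_v$ is unbiased for the gradient of $\frac{1}{|\mathcal{V}_{\mathrm{val}}|}\sum_j 2\arcsin\sqrt{p_j(\theta)}$, up to the same $\frac{K-1}{K}$ factor under the self-inclusive mean. This is a per-task monotone transform of the solve rate, not $J_{\mathrm{val}}$ itself.
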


\begin{proof}
The policy gradient theorem gives:
\begin{equation}
\nabla_\theta J_{\mathrm{val}}(\theta) = \mathbb{E}_{\tau,\, \hat{y} \sim \pi_\theta}\!\left[\bigl(R(\hat{y}, \tau) - b(\tau)\bigr)\, \nabla_\theta \log \pi_\theta(\hat{y} \mid \tau)\right],
\end{equation}
where $b(\tau) = \mathbb{E}_{\hat{y}' \sim \pi_\theta}[R(\hat{y}', \tau)]$ is any state-dependent baseline (variance reduction without bias). GRPO estimates $b(\tau)$ with the group mean $\bar{r} = \frac{1}{K}\sum_{k} R(\hat{y}_k, \tau)$, which is an unbiased estimate of $b(\tau)$ since rollouts are i.i.d.\ on-policy. Substituting yields an unbiased Monte Carlo estimator:
\begin{equation}
\begin{aligned}
\hat{g}_{\mathrm{val}}
&= \frac{1}{|\mathcal{V}_{\mathrm{val}}|}
   \sum_j \frac{1}{K}\sum_k
   \hat{A}_{j,k}\,
   \nabla_\theta \log \pi_\theta(\hat{y}_{j,k} \mid \tau_j), \\
\mathbb{E}[\hat{g}_{\mathrm{val}}]
&= \nabla_\theta J_{\mathrm{val}}(\theta).
\end{aligned}
\end{equation}
\end{proof}

\subsection{Direction Preservation Under Normalization}

\begin{theorem}[Advantage normalization preserves direction]
\label{thm:direction_preserves}
Dividing advantages by their within-group standard deviation $\sigma_A > 0$ rescales the gradient by a positive scalar, preserving its direction.
\end{theorem}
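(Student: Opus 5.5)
The plan is to write the group gradient estimator with unnormalized (centered) advantages and check that dividing by $\sigma_A$ factors out as a positive scalar. Fix a single task $\tau$ with $K$ on-policy rollouts $\hat{y}_1,\dots,\hat{y}_K$ and rewards $R_k$. Define the centered advantages $A_k = R_k - \bar{r}$ and the normalized advantages $\hat{A}_k = A_k/\sigma_A$, where $\sigma_A$ is the within-group standard deviation. The object of interest is the per-task estimator
\begin{equation*}
g(\tau) = \tfrac{1}{K}\textstyle\sum_{k} \hat{A}_k \nabla_\theta \log \pi_\theta(\hat{y}_k \mid \tau),
\end{equation*}
which matches the form of Eq.~\eqref{eq:val_grad}.

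The first key step is the observation that $\sigma_A$ depends only on the rewards $\{R_k\}$ in the group, not on $k$. It can therefore be pulled out of the sum, giving $g(\tau) = \sigma_A^{-1}\, \tilde g(\tau)$ with $\tilde g(\tau) = \frac{1}{K}\sum_k A_k \nabla_\theta \log \pi_\theta(\hat{y}_k\mid\tau)$. Since $\sigma_A>0$, the vectors $g(\tau)$ and $\tilde g(\tau)$ are positive multiples of each other and have the same direction. It follows immediately that
\begin{equation*}
\cos(g(\tau),u)=\cos(\tilde g(\tau),u)
\end{equation*}
for any fixed nonzero $u$. This is the property the Generator reward in Eq.~\eqref{eq:generator_reward} actually uses. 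In particular, $g_\tau$ can be computed with normalized advantages without changing $R_G$. The hypothesis $\sigma_A>0$ is essential: if all rewards in the group are equal, then $A_k\equiv 0$ and the direction is undefined. I will note explicitly that this degenerate case is excluded, as it is by GRPO in practice through the $+\delta$ term, which keeps the scalar positive.

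The main obstacle is the aggregate validation direction $G_v=\frac{1}{|\mathcal{V}_{\mathrm{val}}|}\sum_j g^v_j$. There each task $j$ has its own scale $\sigma_{A,j}$, so normalization produces a positively weighted combination $\sum_j w_j \tilde g^v_j$ with $w_j=1/(|\mathcal{V}_{\mathrm{val}}|\sigma_{A,j})$. This is not in general a scalar multiple of $\sum_j \tilde g^v_j$. I would therefore state the theorem precisely at the group level, which is where the literal claim holds. For the aggregate I would add a remark: the direction is exactly preserved when the $\sigma_{A,j}$ coincide, for instance under a single batch-level normalization as in Eq.~\eqref{eq:gdpo_objective}. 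Otherwise $G_v$ remains a positive reweighting lying in the cone generated by the per-task unbiased directions of Theorem~\ref{thm:grad_unbiased}. It thus still has nonnegative inner product with each component direction whenever those directions are mutually nonnegatively correlated.
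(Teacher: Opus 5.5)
Your argument is correct and matches the paper's proof: since $\sigma_A$ is constant across the $K$ rollouts of a group, it factors out of the sum as a positive scalar. This gives $\hat g^{\mathrm{norm}} = \hat g_{\mathrm{raw}}/(\sigma_A+\epsilon)$ (the paper writes $\sigma_A+\epsilon$), and every cosine is unchanged. Your remark on the aggregate $G_v$ is also sound and goes beyond the paper: its one-line proof compares directly against $\nabla_\theta J_{\mathrm{val}}$ without noting that each validation task carries its own $\sigma_{A,j}$, so normalization there is a positive reweighting rather than a single rescaling, and exact direction preservation holds only per group or under a common normalizer.
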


\begin{proof}
$\hat{A}_k^{\mathrm{norm}} = \hat{A}_k / (\sigma_A + \epsilon)$ with $\sigma_A$ fixed per group implies $\hat{g}^{\mathrm{norm}} = \hat{g}_{\mathrm{raw}} / (\sigma_A + \epsilon)$. A positive scalar does not change cosine: $\cos(\hat{g}^{\mathrm{norm}},\, \nabla_\theta J_{\mathrm{val}}) = \cos(\hat{g}_{\mathrm{raw}},\, \nabla_\theta J_{\mathrm{val}})$.
\end{proof}

\subsection{Why Cosine Similarity Ranks by Validation Improvement}

A first-order Taylor expansion of the validation objective after one gradient step on candidate $\tau$ yields:
\begin{equation}
\label{eq:first_order_improvement}
\Delta J_{\mathrm{val}} \;\approx\; \eta\, \langle g_\tau,\, G_v \rangle \;=\; \eta\, \|g_\tau\|\, \|G_v\|\, \cos(g_\tau,\, G_v),
\end{equation}
where $G_v$ is the aggregated validation-gradient direction and $\eta$ is the learning rate. Among candidates, $\eta$ and $\|G_v\|$ are constant. Although $\|g_\tau\|$ varies across tasks, in practice it can be confounded by task length and patch complexity. Normalizing out the magnitude via cosine similarity yields a \textbf{scale-normalized proxy for validation-aligned improvement}. We do not require exact rank preservation; the reward is designed to prefer candidates whose update directions are better aligned with held-out validation gradients. This motivates Eq.~\eqref{eq:generator_reward}.

Empirically, Table~\ref{tab:grad_metric_ablation} confirms that cosine ($50.40\%$) outperforms the unnormalized inner product ($49.40\%$), validating the directional focus.

\section{Discussion: Generator Reward Strategies}
\label{app:generator_reward_discussion}

The Generator reward determines which candidate tasks enter the training curriculum. We compare several reward design philosophies.

\subsection{Reward Families}

\paragraph{Difficulty-Aware Rewards.}
These rewards depend on the Solver's group pass rate $p$ on a candidate:

\begin{itemize}[leftmargin=1.2em,itemsep=2pt]
\item \textbf{Variance reward}~\cite{liu2025spiceselfplaycorpusenvironments}: $r_G \!=\! \exp\!\bigl(-(\mathrm{Var}(l_1,...,l_K) - 0.25)^2 / 2\sigma^2\bigr)$, peaking at $p\!=\!0.5$.

\item \textbf{Uncertainty reward}~\cite{huang2026rzero}: $r_G = 1 - 2|p - 0.5|$, a triangular peak at $p\!=\!0.5$.

\item \textbf{Hardness reward}~\cite{zhao2025absolutezero}: $r_G = 1 - p$, favoring maximal difficulty.
\end{itemize}

All three share the assumption that difficulty $\approx$ learning value. However, in SWE self-play, many ``hard'' tasks are hard for irrelevant reasons (e.g., requiring domain knowledge absent from the training distribution), providing no signal for downstream benchmarks.

\paragraph{Gradient-Aligned Reward (Ours).}
Rather than proxying learning value through difficulty, we directly measure the degree to which a candidate task's optimization direction matches the direction that improves validation performance:
\begin{equation}
R_G(\tau) = \mathcal{V}_{\mathrm{alid}}(\tau, v, r) \cdot \cos(g_\tau, G_v).
\end{equation}
This decouples difficulty from utility: a moderate-difficulty task that teaches transferable repair patterns scores higher than a near-impossible one whose gradient is orthogonal to $G_v$.

\paragraph{Hybrid: Gradient $\times$ Difficulty.}
We also test gating gradient alignment by a Gaussian difficulty prior:
$R_G^{\mathrm{hybrid}} = \mathcal{V}_{\mathrm{alid}} \cdot \cos(g_\tau, G_v) \cdot \exp\!\bigl(-(p - 0.5)^2 / 0.08\bigr)$.

\subsection{Ablation Results}

\begin{table}[t]
\centering
\footnotesize
\setlength{\tabcolsep}{4.5pt}
\renewcommand{\arraystretch}{0.98}
\arrayrulecolor{RuleGray}

\caption{Generator reward ablation on SWE-bench Verified (Iteration 3).}
\label{tab:gen_reward_ablation}

\begin{tabular}{lcc}
\specialrule{1.1pt}{0pt}{0pt}
\textbf{Generator Reward} & \textbf{Verified (\%)} & \textbf{$\Delta$} \\
\midrule
Hardness ($1 - p$) & 47.40 & \dar{3.00} \\
Uncertainty ($1 - 2|p - 0.5|$) & 48.20 & \dar{2.20} \\
Variance (Gaussian) & 48.80 & \dar{1.60} \\
\rowcolor{BlockLavender}
\textbf{Gradient-aligned (Ours)} & \textbf{50.40} & \textbf{--} \\
Gradient + Difficulty hybrid & 50.60 & \uar{0.20} \\
\bottomrule
\end{tabular}
\end{table}

\paragraph{Analysis.}
The Hardness reward ($-3.00$) produces the largest degradation because near-impossible tasks yield vanishing advantages (all rollouts fail) and thus uninformative gradients. The Variance reward performs best among difficulty-only baselines, confirming the intuition from SPICE~\cite{liu2025spiceselfplaycorpusenvironments} that targeting the frontier ($p \approx 0.5$) is preferable to maximizing difficulty. However, it still trails our gradient-aligned reward by 1.60 points because \emph{being at the frontier does not guarantee relevance to the target distribution}.

The hybrid barely improves over pure gradient alignment ($+0.20$), indicating that the cosine score already encodes appropriate difficulty implicitly: tasks that are trivial ($\hat{A} \approx 0$ since all pass) or impossible ($\hat{A} \approx 0$ since all fail) produce near-zero gradients and thus receive low cosine scores by construction.

\paragraph{Qualitative Insight.}
We inspected tasks selected under each reward at Iteration 2. The Variance reward frequently selects tasks requiring obscure library internals (e.g., C-extension edge cases) where the Solver occasionally gets lucky but learns no transferable skill. The gradient-aligned reward preferentially selects tasks involving common repair patterns (e.g., off-by-one in iteration bounds, incorrect argument ordering) that appear frequently in the validation set, producing broadly useful gradient signal.

\section{Validation Set Design and Sensitivity}
\label{app:validation_set}

\subsection{Design Principles}

The validation set $\mathcal{V}_{\mathrm{val}}$ provides a stable reference direction for the Generator reward. It is never trained on---its sole purpose is to produce $G_v$, the target gradient. We construct $\mathcal{V}_{\mathrm{val}}$ from a held-out subset of BeyondSWE, selecting 100 tasks that span diverse repositories and difficulty levels. To avoid distribution bias, we stratify by repository language (Python 40\%, JavaScript 30\%, TypeScript 20\%, Go/Rust 10\%). The validation set is never used for Solver training or final evaluation.

\subsection{Sensitivity to Validation Set Size}

\begin{table}[t]
\centering
\footnotesize
\setlength{\tabcolsep}{5pt}
\renewcommand{\arraystretch}{0.98}
\arrayrulecolor{RuleGray}

\caption{Validation set size sensitivity (SWE-bench Verified, Iteration 3). Gradient stability is measured as the Pearson correlation between $G_v$ vectors from two independent rollout sets.}
\label{tab:val_size_ablation}

\begin{tabular}{lccc}
\specialrule{1.1pt}{0pt}{0pt}
$|\mathcal{V}_{\mathrm{val}}|$ & \textbf{Verified (\%)} & \textbf{$G_v$ Stability} & \textbf{$\Delta$} \\
\midrule
20 & 49.20 & 0.34 & \dar{1.20} \\
50 & 50.00 & 0.52 & \dar{0.40} \\
\rowcolor{BlockLavender}
\textbf{100 (default)} & \textbf{50.40} & \textbf{0.61} & -- \\
200 & 50.60 & 0.74 & \uar{0.20} \\
\bottomrule
\end{tabular}
\end{table}

With only 20 tasks the gradient estimate is too noisy (correlation 0.34) and performance drops 1.20 points. At 100 tasks the signal stabilizes (correlation 0.61); doubling to 200 yields diminishing returns ($+0.20$). We therefore fix $|\mathcal{V}_{\mathrm{val}}| = 100$ throughout.

\subsection{Recomputation Frequency}

We refresh $G_v$ once per iteration (every 12k training instances). Ablating this choice: recomputing every 4k instances gives only $+0.20$ at $3\times$ validation cost, while using a single fixed $G_v$ across all iterations degrades by $-1.40$ as the policy's optimization landscape drifts.

\section{Computational Overhead}
\label{app:compute_overhead}

\subsection{Cost Breakdown}

The gradient-alignment mechanism adds two compute steps: (i) rolling out Solver trajectories on 100 validation tasks to form $G_v$ (once per iteration), and (ii) scoring each candidate task's gradient against $G_v$.

\begin{table}[t]
\centering
\footnotesize
\setlength{\tabcolsep}{4pt}
\renewcommand{\arraystretch}{0.98}
\arrayrulecolor{RuleGray}

\caption{Wall-clock time per iteration (8$\times$A100-80G).}
\label{tab:compute_overhead}

\begin{tabular}{lcc}
\specialrule{1.1pt}{0pt}{0pt}
\textbf{Stage} & \textbf{Time (h)} & \textbf{\%} \\
\midrule
Generator: task proposal + validation & 4.2 & 28.0 \\
Solver: rollout + execution feedback & 7.8 & 52.0 \\
Policy update (GRPO/GDPO) & 1.7 & 11.3 \\
\midrule
\textit{Gradient alignment:} & & \\
\quad $G_v$ computation (800 rollouts) & 0.5 & 3.3 \\
\quad Per-candidate scoring & 0.8 & 5.1 \\
\midrule
\textbf{Total} & \textbf{15.0} & \textbf{100} \\
\bottomrule
\end{tabular}
\end{table}

The gradient-alignment overhead totals 1.3 h/iteration (8.4\%). Key observations:
\begin{itemize}[leftmargin=1.2em,itemsep=1pt]
\item $G_v$ computation requires 800 rollouts (100 tasks $\times$ $K\!=\!8$), negligible next to the ${\sim}96$k Solver rollouts for training.
\item Per-candidate scoring reuses rollouts already generated during the semantic validation gate ($f_4$). The cosine itself is a single dot product ($<$1s per candidate).
\item Across 3 iterations, cumulative overhead is ${\sim}3.9$ wall-clock hours ($<9\%$ of the 45h wall-clock budget).
\end{itemize}
Compared to SSR~\cite{wei2025selfplayswerl} which trains on all validated tasks indiscriminately, Socratic-SWE achieves $+3.40$ higher Verified at comparable total FLOPs, demonstrating that the modest selection cost is recovered through better data efficiency.

\subsection{Alignment Metric Ablation}

\begin{table}[t]
\centering
\footnotesize
\setlength{\tabcolsep}{5pt}
\renewcommand{\arraystretch}{0.98}
\arrayrulecolor{RuleGray}

\caption{Cosine vs.\ inner product for gradient scoring.}
\label{tab:grad_metric_ablation}

\begin{tabular}{lcc}
\specialrule{1.1pt}{0pt}{0pt}
\textbf{Metric} & \textbf{Verified (\%)} & \textbf{$\Delta$} \\
\midrule
Inner product & 49.40 & \dar{1.00} \\
\rowcolor{BlockLavender}
\textbf{Cosine similarity} & \textbf{50.40} & -- \\
\bottomrule
\end{tabular}
\end{table}

The inner product conflates direction with magnitude: multi-file patches produce large gradients irrespective of their utility, biasing selection toward complex but irrelevant tasks. Cosine isolates the directional signal (cf.\ Theorem~\ref{thm:direction_preserves}).

\section{Prompt Templates and Implementation Details}
\label{app:prompts}

This section provides shortened versions of the prompts and key configuration.

\subsection{Generator System Prompt}
\label{app:prompt_generator}

\begin{tcolorbox}[appendixbox, title={\textbf{Generator System Prompt (Shortened)}}, colbacktitle=gray!10, coltitle=black, fonttitle=\bfseries]
\begin{Verbatim}[fontsize=\scriptsize,breaklines=true,breakanywhere=true]
<ROLE>
You are a bug injection agent for the Socratic-SWE training pipeline. Your job
is to introduce exactly one realistic, targeted bug into a repository located
at /testbed/.
</ROLE>

<INTERFACE>
You interact with the repository only through a Linux Bash shell. Use ordinary
shell commands to inspect files, search code, edit files, run tests, and inspect
git diff. Do not assume access to any structured file editor or search tool.
Issue at most one shell command per turn and wait for the command output before
continuing.
</INTERFACE>

<BUG_INJECTION_RULES>
1. Inject exactly one atomic semantic mistake in one production file.
2. Do not modify tests.
3. Do not introduce syntax errors, import errors, or changes that prevent the
   module from loading.
4. The bug must be reversible: the original code is the reference fix.
5. Keep the diff minimal and free of unrelated cleanup.
6. Do not add comments, logs, TODOs, or variable names that reveal the bug.
</BUG_INJECTION_RULES>

<WORKFLOW>
1. Inspect the repository and identify a plausible target location.
2. Identify visible tests or behavior that should expose the injected bug.
3. State the intended semantic change before editing.
4. Make one contiguous source-code edit using Bash-accessible file operations.
5. Run the relevant visible test(s) and confirm that the target behavior fails.
6. Run collateral checks when feasible to avoid broad breakage.
7. Inspect git diff and stop once a clean single-bug diff is obtained.
</WORKFLOW>
\end{Verbatim}
\end{tcolorbox}
\subsection{Mini-SWE-agent Prompt}
\label{app:prompt_solver}

\begin{tcolorbox}[appendixbox, title={\textbf{System Prompt}}, colbacktitle=gray!10, coltitle=black, fonttitle=\bfseries]
\begin{Verbatim}[fontsize=\scriptsize,breaklines=true,breakanywhere=true]
You are a helpful assistant that can interact with a computer shell to solve programming tasks.
\end{Verbatim}
\end{tcolorbox}

\begin{tcolorbox}[appendixbox, title={\textbf{Base Task Instructions (Shortened)}}, colbacktitle=gray!10, coltitle=black, fonttitle=\bfseries]
\begin{Verbatim}[fontsize=\scriptsize,breaklines=true,breakanywhere=true]
Given a task description, the agent interacts with a Linux shell in /testbed to
make the required source-code changes.

The agent is instructed to:
1. inspect the repository and identify relevant files;
2. reproduce or understand the issue when possible;
3. modify only source files needed for the task;
4. verify the change by running visible checks when available;
5. test edge cases when feasible;
6. leave a clean git diff containing only the intended changes;
7. finish according to the configured mini-swe-agent completion protocol.

The agent should not modify tests, generated files, build artifacts, or
unrelated configuration files unless they are directly required by the task.
\end{Verbatim}
\end{tcolorbox}
\subsection{Base Mini-SWE-agent Prompt}
\label{app:prompt_solver_task}

\begin{tcolorbox}[appendixbox, title={\textbf{Solver Task Prompt Template}}, colbacktitle=gray!10, coltitle=black, fonttitle=\bfseries]
\begin{Verbatim}[fontsize=\scriptsize,breaklines=true,breakanywhere=true]
Fix the issue described below.

<ISSUE>
{{ problem_statement }}
</ISSUE>

<TASK>
You are in the repository root at /testbed/. Use Bash commands to inspect the
codebase, modify production source code, run visible tests when available, and
leave the repository in a fixed state. Do not modify tests.
</TASK>

<TESTING>
If the issue text includes a public reproduction command or visible test, run it
to confirm the failure and rerun it after the fix. If no such command is
provided, identify and run the most relevant visible tests you can. Hidden
benchmark tests are never provided.
</TESTING>
\end{Verbatim}
\end{tcolorbox}

\subsection{Training Hyperparameters}
\label{app:hyperparameters}

Table~\ref{tab:hyperparams} lists all training hyperparameters.

\begin{table}[t]
\centering
\footnotesize
\setlength{\tabcolsep}{4.5pt}
\renewcommand{\arraystretch}{0.98}
\arrayrulecolor{RuleGray}

\caption{Training hyperparameters for Socratic-SWE.}
\label{tab:hyperparams}

\begin{tabular}{ll}
\specialrule{1.1pt}{0pt}{0pt}
\textbf{Hyperparameter} & \textbf{Value} \\
\midrule
\multicolumn{2}{l}{\textit{Model Configuration}} \\
Base model & Qwen3.5-9B \\
Skill extractor & Qwen3.6-27B \\
Context length & 32,768 tokens \\
\midrule
\multicolumn{2}{l}{\textit{Training}} \\
Iterations & 3 \\
Instances per iteration & 12,000 \\
Group size ($K$) & 8 \\
Learning rate & $1 \times 10^{-6}$ \\
Optimizer & AdamW \\
KL coefficient ($\beta$) & 0.01 \\
Clip ratio ($\epsilon$) & 0.2 \\
Batch size & 64 \\
\midrule
\multicolumn{2}{l}{\textit{Solver Reward (GDPO)}} \\
$\lambda_1$ (full pass) & 0.5 \\
$\lambda_2$ (repair rate) & 0.3 \\
$\lambda_3$ (regression avoidance) & 0.2 \\
\midrule
\multicolumn{2}{l}{\textit{Gradient Alignment}} \\
Validation set size & 100 \\
Rollouts per validation task & 8 \\
Recomputation frequency & Once per iteration \\
Alignment metric & Cosine similarity \\
\midrule
\multicolumn{2}{l}{\textit{Generator Validation Gate}} \\
Max attempts per repository & 8 \\
Stability reruns & 3 \\
\midrule
\multicolumn{2}{l}{\textit{Infrastructure}} \\
GPUs & 8$\times$ A100-80G \\
Sandbox & Aone Cloud Sandbox (per-task) \\
Wall-clock per iteration & $\sim$15 hours \\
\bottomrule
\end{tabular}
\end{table}

\section{Worked Example of Skill-Guided Task Generation}
\label{app:worked_example}
\begin{tcolorbox}[
  appendixbox,
  title={\textbf{Shortened Repository-Level Skill Guidance}},
  colbacktitle=gray!10,
  coltitle=black,
  colframe=gray!40,
  fonttitle=\bfseries
]
\begin{Verbatim}[fontsize=\scriptsize,breaklines=true,breakanywhere=true]
# OAuthLib Repository Skill

## 1. Target Overview
OAuthLib implements OAuth 1.0, OAuth 2.0, OpenID Connect, and related RFC
extensions. High-value tasks in this repository are usually small semantic
regressions in protocol helpers, endpoint constructors, and grant wrappers.
They require exact repository-specific semantics rather than broad rewriting.

Relevant modules:
- oauthlib/oauth2/rfc6749/utils.py
- oauthlib/oauth2/rfc8628/endpoints/device_authorization.py
- oauthlib/oauth1/
- oauthlib/openid/connect/core/grant_types/

Common conventions:
- Scope helpers preserve exact conversion behavior.
- Constructors store protocol configuration without transformation.
- OAuth1 parameters are string-like but semantically distinct.
- OpenID Connect grants extend or wrap OAuth2 grant behavior.

## 2. Solver Weakness Analysis
Historical Solver traces show four recurring weaknesses.

1. Scope conversion: the Solver localizes list_to_scope or scope_to_list but
uses generic filtering, sorting, or normalization that violates OAuthLib's
helper contracts.

2. Constructor storage: endpoint and client constructors contain same-typed
parameters, causing the Solver to swap fields, transform values, or store
parameters under the wrong private attributes.

3. OAuth1 plumbing: nonce, timestamp, realm, and callback_uri are all
string-like but semantically different, so type information alone is
insufficient.

4. OIDC inheritance: OpenID Connect grants inherit from or delegate to OAuth2
grant logic. The Solver may replace inherited behavior instead of extending it,
or forward attributes to the wrong object.

Representative traces include failures in scope utility tests, device endpoint
tests, OAuth1 client/signature tests, and OpenID Connect grant-type tests.

## 3. Bug Injection Playbook

### Pattern A: Scope Helper Contract
Targets:
- oauthlib/oauth2/rfc6749/utils.py
- list_to_scope, scope_to_list, params_from_uri

Mutation ideas:
- Convert only some scope elements instead of all elements.
- Reorder tokens during list/string conversion.
- Mishandle None, empty strings, or empty collections.
- Replace repository-specific conversion with generic filtering.

### Pattern B: Device Endpoint Constructor Storage
Target:
- oauthlib/oauth2/rfc8628/endpoints/device_authorization.py

Mutation ideas:
- Swap verification_uri and verification_uri_complete.
- Store interval or expires_in under the wrong attribute.
- Transform a constructor value that should be preserved unchanged.

### Pattern C: OAuth1 Parameter Plumbing
Target:
- oauthlib/oauth1/

Mutation ideas:
- Confuse nonce and timestamp.
- Confuse callback_uri and realm.
- Keep most fields correct so the bug remains localized.

### Pattern D: OIDC Inheritance and Delegation
Target:
- oauthlib/openid/connect/core/grant_types/

Mutation ideas:
- Drop inherited OAuth2 response fields.
- Forward a method to the wrong wrapped object.
- Override a method without preserving parent behavior.

## 4. Validation and Anti-Patterns
Use existing targeted pytest tests, not synthetic tests or full-suite execution.
A candidate is accepted only if the bug patch applies cleanly, target tests fail,
the reversed patch restores passing tests, repeated runs are stable, and the
issue text does not leak the oracle patch.

Do not generate tasks that modify tests, introduce syntax/import errors, require
network access, touch many unrelated modules, reveal the patch in the issue
text, or produce failures that cannot be reproduced by targeted pytest commands.
\end{Verbatim}
\end{tcolorbox}

\end{document}